\title{\LARGE \bf Cost-Optimal Switching Protection Strategy in Adaptive Networks}
\author{Masaki Ogura and Victor M.~Preciado% <-this % stops a space
\thanks{The authors are with the Department of Electrical and Systems
Engineering, University of Pennsylvania, Philadelphia, PA 19014, USA.
Email:  {\tt\small \{ogura,preciado\}@seas.upenn.edu}}%
\thanks{{This work was supported in part by the NSF under grants CNS-1302222
and IIS-1447470.}}%
}
\newtheorem{definition}{Definition}[section]
\newtheorem{assumption}[definition]{Assumption}
\newtheorem{proposition}[definition]{Proposition}
\newtheorem{theorem}[definition]{Theorem}
\newtheorem{remark}[definition]{Remark}
\newtheorem{problem}[definition]{Problem}
\newcommand{\dN}{\,dN}
\newcommand{\ubar}[1]{\underaccent{\bar}{#1}}
\DeclareMathOperator*{\col}{col}
\renewcommand{\Pr}{P}
\DeclareSymbolFont{bbold}{U}{bbold}{m}{n}
\DeclareSymbolFontAlphabet{\mathbbold}{bbold}
\newcommand{\onev}{\mathbbold{1}}
\DeclareMathAlphabet{\pazocal}{OMS}{zplm}{m}{n}
\renewcommand{\mathcal}[1]{\pazocal{#1}}
\DeclareMathOperator*{\minimize}{minimize}
\DeclareMathOperator*{\subjectto}{subject\ to}
\newcommand{\afterequation}{\vskip 3pt}
\begin{document}

\maketitle
\thispagestyle{empty}
\pagestyle{empty}

%%%%%%%%%%%%%%%%%%%%%%%%%%%%%%%%%%%%%%%%%%%%%%%%%%%%%%%%%%%%%%%%%%%%%%%%%%%%%%%%
\begin{abstract}
In this paper, we study a model of network adaptation mechanism to control spreading processes over switching contact networks, called adaptive susceptible-infected-susceptible model. The edges in the network model are randomly removed or added depending on the risk of spread through them.  By analyzing the joint evolution of the spreading dynamics ``in the network" and the structural dynamics ``of the network", we derive conditions on the adaptation law to control the dynamics of the spread in the resulting switching network. In contrast with the results in the literature, we allow the initial topology of the network to be an arbitrary graph. Furthermore, assuming there is a cost associated to switching edges in the network, we propose an optimization framework to find the cost-optimal network adaptation law, i.e., the cost-optimal edge switching probabilities. Under certain conditions on the switching costs, we show that the optimal adaptation law can be found using convex optimization. We illustrate our results with numerical simulations.
\end{abstract}

%%%%%%%%%%%%%%%%%%%%%%%%%%%%%%%%%%%%%%%%%%%%%%%%%%%%%%%%%%%%%%%%%%%%%%%%%%%%%%%%
\section{Introduction}

Accurate prediction and effective control of spreading dynamics over networks
are relevant problems in epidemiology and public health, computer malware, or
security of cyberphysical networks. Although we find many recent advances in the
field of network epidemiology \cite{Pastor-Satorras2015}, there are still many
open questions to transfer this knowledge to realistic epidemiological
situations. One fundamental result in the mathematical analysis of spreading in
networks is the close connection between the eigenstructure of the contact
network and epidemic
thresholds~\cite{Lajmanovich1976,Chakrabarti2008,VanMieghem2009a}. This result
enabled the authors in
\cite{Preciado2013,Preciado2013a,Preciado2013b,Preciado2014} to propose a convex
optimization framework to design the optimal distribution of  pharmaceutical
resources to control disease spread. This framework is specially adapted to
static network structures in which the pattern of interconnections does not
change over time. As we argue below, this may not be the case in many practical
situations.

Social distancing is one of the most important nonpharmaceutical approaches to
control disease spread over human contact networks~\cite{Bell2006,Funk2010}.
Examples of social distancing are, for instance, isolation of patients, school
closures, and avoidance of crowds. In spite of the obvious effect that such
behavior have on the dynamics of the spread, there is a lack of studies about
the role of social distancing in the spread of diseases over human contact 
networks. One of the reasons is that social distancing induces an adaptation of
the network structure that depends on the state of the infection. Although there
are results in the literature about disease spreading over time-varying networks
(see, e.g., \cite{Volz2009,Perra2012,Ogura2015a}), these works are based on the
assumption that the evolution of the network is independent of the state of the
individuals. In this paper, we propose a tractable framework to analyze the
co-evolution of the state-dependent network structure and the dynamics of the
spreading process taking place on it.

Most of the available studies of spreading processes over human networks with
social distancing have been relying on various unrealistic simplifying
assumptions. The authors in \cite{Gross2006,Zanette2008a,Guo2013,Tunc2014}
propose epidemic thresholds under the so-called mixing assumption; all the
individuals in a network interact randomly with each other. However, this
assumption is not satisfied in structured human populations. Although the
analysis in \cite{Valdez2012} does not rely on the mixing assumption, it relies
on the quantity called a reproduction number, whose validity for disease spread
over time-varying networks is not yet fully established~\cite{Holme2015a}.

This paper analyzes, without the mixing assumption, the dynamics of spreading
processes taking place in switching networks whose structure adapt to the state
of the spread. The disease spread is modeled by an extended version of the
well-known susceptible-infected-susceptible (SIS) model, which is called the
adaptive SIS model~\cite{Guo2013}. Without the mixing assumption employed
in~\cite{Guo2013}, we derive conditions under which the network adaptation is
able to protect against the spread of the disease. We furthermore use these
conditions to propose a cost-optimal adaptation policy to contain the disease.
This policy is based on the assumption that adapting the network structure to
the state of the disease has an associated cost. The optimal policy can be then
found by solving an optimization program. Under certain conditions, this
optimization program can be effectively solved using elements from convex
optimization~\cite{Boyd2007}.

This paper is organized as follows. In Section~\ref{sec:model}, we introduce the
adaptive SIS model studied in this paper. In Section~\ref{sec:stability},  we
analyze the exponential stability of the infection-free equilibrium of the
adaptive SIS models. Based on our stability analysis, Sections~\ref{sec:homo}
and \ref{sec:hetero} study an cost-optimal adaptation strategy for networks of
homogeneous and heterogeneous agents, respectively.

\subsection{Mathematical Preliminaries}

The probability of an event is denoted by $\Pr(\cdot)$. The expectation of a random variable is denoted by~$E[\cdot]$. We let $I$ denote the identity matrix and $\onev_p$ the $p$-dimensional vector whose entries are all one (we omit the dimension~$p$ when it is obvious from the context). A real matrix $A$, or a vector as its special case, is said to be nonnegative, denoted by $A\geq 0$, if all the entries of $A$ are nonnegative. The notations~$A > 0$, $A\leq 0$ and $A<0$ are understood in the obvious manner. For another matrix $B$ having the same dimensions as $A$, the notation $A\leq B$ implies $A-B\leq 0$. We again understand $A<B$, $A\geq B$, and $A>B$ in the obvious manner. The Kronecker product~\cite{Brewer1978} of $A$ and $B$ is denoted by $A\otimes B$. Let $A$ be a square matrix. The maximum real part of the eigenvalues of $A$ is denoted by~$\eta(A)$. We say that $A$ is Hurwitz stable if $\eta(A)< 0$. Also, we say that $A$ is Metzler if the off-diagonal entries of $A$ are all non-negative. We say that $A$ is irreducible if no similarity transformation by a permutation matrix makes $A$ into a block upper triangular matrix. For matrices $A_1$, $\dotsc$, $A_n$, the direct sum $\bigoplus_{i=1}^n A_i$ is defined as the block diagonal matrix having the block diagonals $A_1$, $\dotsc$, $A_n$. When $A_1$, $\dotsc$, $A_n$ have the same number of columns, we define
$\col_{1\leq i\leq n} A_i = \col(A_1, \dotsc, A_n) $ as the block matrix obtained by stacking the matrices $A_1$, $\dotsc$, $A_n$.

A directed graph is a pair~$\mathcal G = (\mathcal V, \mathcal E)$, where
$\mathcal V$ is a finite set of nodes, and $\mathcal E \subseteq \mathcal
V\times \mathcal V$ is a set of directed edges. Unless otherwise stated, we
assume $\mathcal V = \{1, \dotsc, n\}$. A directed path from $i$ to $j$ in
$\mathcal G$ is an ordered set of nodes $(i_0, \cdots, i_\ell)$ such that $i_0 =
i$, $(i_k, i_{k+1})\in\mathcal E$ for $k=0, \dotsc, \ell-1$, and $i_\ell = j$.
We say that $\mathcal G$ is strongly connected if there exists a directed path
from $i$ to $j$ for all $i, j\in \mathcal V$. The adjacency matrix of $\mathcal
G$ is defined as the $n\times n$ matrix $A = [a_{ij}]_{i, j}$ such that $a_{ij}
= 1$ if $(i, j)\in\mathcal E$ and $a_{ij} = 0$ otherwise. Similarly, an
undirected graph is a pair \mbox{$\mathcal G = (\mathcal V, \mathcal E)$}, where
$\mathcal V$ is a finite set and $\mathcal E$ is a subset of unordered pairs
$\{i, j\}$ of the elements $i, j\in \mathcal V$. The adjacency matrix of an
undirected graph is defined in a similar manner. A graph is strongly connected
if and only if its adjacency matrix is irreducible.

Finally, we recall basic facts about a class of optimization problems called geometric programs~\cite{Boyd2007}. Let $x_1$, $\dotsc$, $x_m$ denote $m$ real positive variables. We say that a real-valued function $f$ of $x = (x_1, \dotsc, x_m)$ is a {\it monomial function} if there exist $c>0$ and $a_1, \dotsc, a_m \in \mathbb{R}$ such that $f(x) = c {\mathstrut x}_1^{a_1} \dotsm {\mathstrut x}_m^{a_m}$. Also, we say that $f$ is a {\it posynomial function} if it is a sum of monomial functions of $x$. Given posynomial functions $f_0$, $\dotsc$, $f_p$ and monomial functions $g_1$, $\dotsc$, $g_q$, the optimization problem
\begin{equation*}
\begin{aligned}
\minimize_x\ 
&
f_0(x)
\\
\subjectto\ 
&
f_i(x)\leq 1,\quad i=1, \dotsc, p, 
\\
&
g_j(x) = 1,\quad j=1, \dotsc, q, 
\end{aligned}
\end{equation*}
is called a {\it geometric program}. It is known~\cite{Boyd2007} that a geometric program can be converted into a convex optimization problem.

\section{Susceptible-Infected-Susceptible Model\\over Adaptive Networks}\label{sec:model}

This section introduces the model of spreading processes over
adaptive networks studied in this paper and state the optimal design problem
under consideration. Each node in the network can be in one of two states:
\emph{susceptible} or \emph{infected}. The state of node $i$ evolves over time
and is represented by a binary variable $x_i\in\{0, 1\}$. We say that node $i$
is {susceptible} at time $t$ if $x_i(t) = 0$, and is {infected} at time $t$ if
$x_i(t) = 1$. In this paper, we model the evolution of $x_i$ as a
continuous-time stochastic process taking values in $\{0, 1\}$. We also assume
that the structure of the network in which the spreading process is taking place
evolves over time. In particular, we model the network $\mathcal G$ as a
continuous-time random graph process taking values in the set of undirected
graphs with $n$ nodes. In other words, we model the dynamics of spreading as a
stochastic process taking place over a random graph process. We denote by
$\mathcal N_i(t)$ the set of neighbors of node $i$ in the graph $\mathcal G(t)$,
i.e., $\mathcal N_i(t)=\{j\in\mathcal{V}\colon \{i,j\}\in\mathcal G(t)\}$, and
by $A(t) = [a_{ij}(t)]_{i,j}$ the adjacency matrix of $\mathcal G(t)$.

The spreading models over adaptive networks studied in this paper are formally introduced as the class of pairs $(x, \mathcal G)  = (\{x_i\}_{i=1}^n, \mathcal G)$ satisfying the following definition:
 
\begin{definition}
Let $\mathcal G_0 = (\mathcal V, \mathcal E_0)$ be an undirected graph with
adjacency matrix $A_0 = [a_{ij}(0)]_{i,j}$. The pair $(x,\mathcal G)$ is said to
be an \emph{adaptive susceptible-infected-susceptible model} over $\mathcal G_0$
(\emph{ASIS model} for short) if there exist nonnegative numbers $\beta_i$,
$\delta_i$, $\phi_i$, and $\psi_{ij}$ ($i, j=1, \dotsc, n$) such that the
following conditions hold:
\begin{enumerate}[leftmargin=.5cm,label=\alph*)]
\item $\mathcal G(0) = \mathcal G_0$; 

\item The process $(x, \mathcal G)$ is Markov; 

\item For every $i$, the transition probabilities of $x_i$ are given
by \newcommand{\mi}{-.66cm}
\begin{align}
\hspace{\mi}\Pr(x_i(t+h) = 1 \mid x_i(t) = 0) &=
\beta_i\ \sum_{\mathclap{k \in \mathcal N_i(t)}}\ x_k(t)\,h + o(h),
\label{eq:infection}
\\
\hspace{\mi}\Pr(x_i(t+h) = 0 \mid x_i(t) = 1) &= \delta_i \,h + o(h), 
\label{eq:recovery}
\end{align}
where $o(h)$ is a function such that $\lim_{h\to 0}o(h)/h = 0$.

\item For all $i, j$, the transition probabilities of $a_{ij}$ are given by
\begin{align}
\hspace{\mi}\Pr(a_{ij}(t+h) \!=\! 0 \!\mid\! a_{ij}(t) \!=\! 1)
&= 
(\phi_ix_i(t) + \phi_{\!j}x_{\!j}(t)) h +  o(h),
\label{eq:cut}
\\
\hspace{\mi}\Pr(a_{ij}(t+h) \!=\! 1 \!\mid\! a_{ij}(t) \!=\! 0) &= a_{ij}(0)\psi_{ij} h +  o(h). 
\label{eq:rewire}
\end{align}

\item $\psi_{ij} = \psi_{ji}$ for all $i$ and $j$.
\end{enumerate}
The constants $\beta_i$, $\delta_i$, $\phi_i$, and $\psi_{ij}$ are respectively
called \emph{infection}, \emph{recovery}, \emph{cutting}, and \emph{rewiring
rates}.
\end{definition}

We can interpret the above model as follows. Item b) indicates that the future
evolution of the spread, given the present state, does not depend on the past.
The probabilities in c) describe how nodal states evolve. Notice that, if
$\mathcal G(t)$ were a static network, these probabilities would coincide with
those of the NIMFA model \cite{VanMieghem2009a} with heterogeneous infection and
recovery rates. Eqn.~\eqref{eq:infection} indicates that, if node~$i$ is
susceptible and its neighbor $j$ is infected, then $i$ becomes infected with the
instantaneous infection rate $\beta_i$. Moreover, the rate is proportional to
the number of infected neighbors. Eqn.~\eqref{eq:recovery} implies that, once
node $i$ becomes infected, it will become susceptible with an instantaneous
recovery rate $\delta_i$. 

Item d) describes an adaptation mechanism of the network to the state of the
disease. Eqn.~\eqref{eq:cut} indicates that, whenever a node $i$ is infected,
the node adaptively removes edges connecting the node and its neighbors
according to a Poisson process with rate $\phi_i$. This mechanism is designed to
contain the spread through edges connected to infected nodes. Moreover,
\eqref{eq:rewire} describes a mechanism for which `cut' edges are `rewired' or
added back to the network. We assume that edge~$\{i, j\}$ is added to the
network with a rewiring rate~$\psi_{ij}$. See Fig.~\ref{fig:adaptive} for a
schematic picture of these transition probabilities.
\begin{figure}[tb]
\begin{minipage}[b]{.5\linewidth}
\centering\includegraphics[width=.8\linewidth]{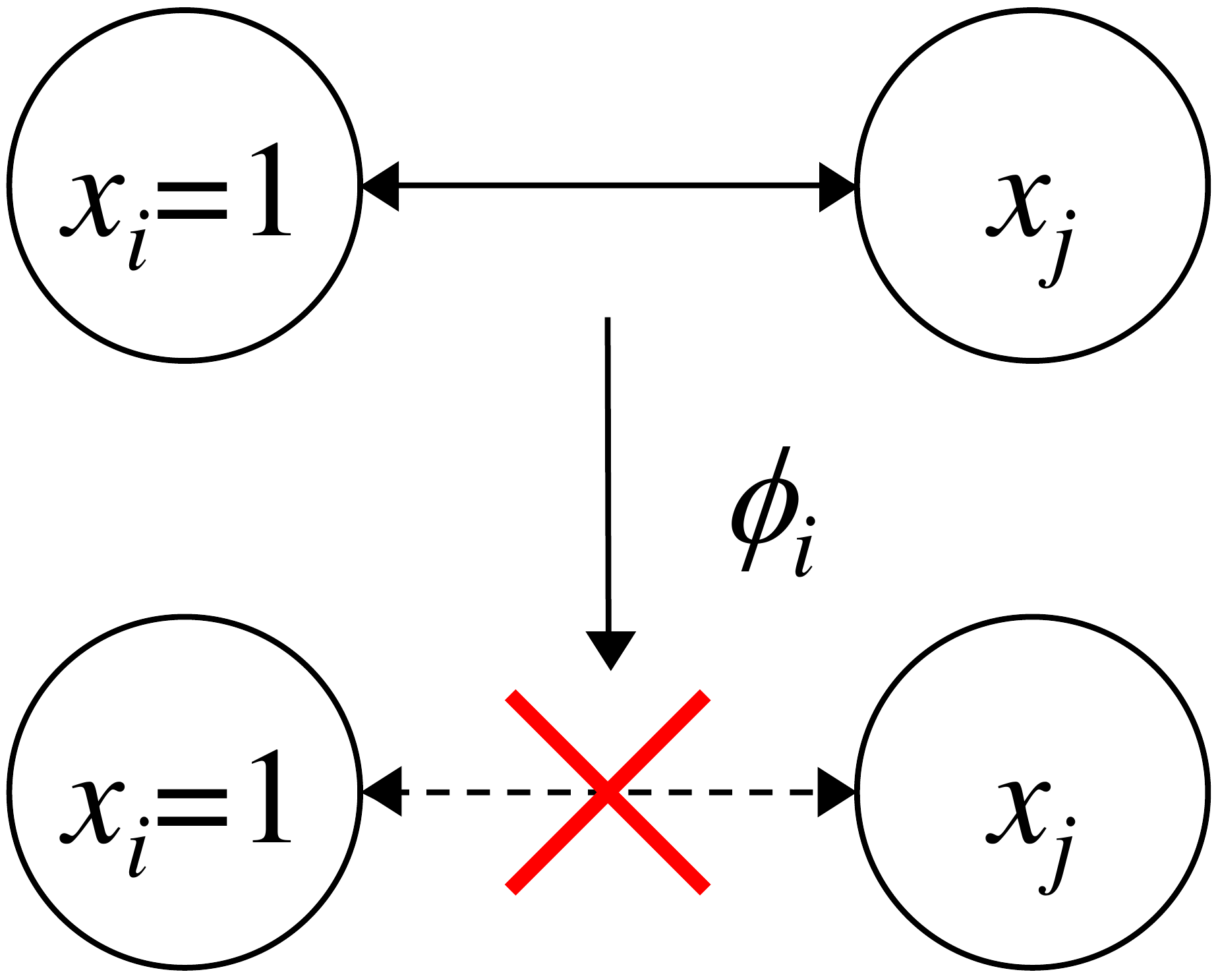}
\end{minipage}%
\begin{minipage}[b]{.5\linewidth}
\centering\includegraphics[width=.8\linewidth]{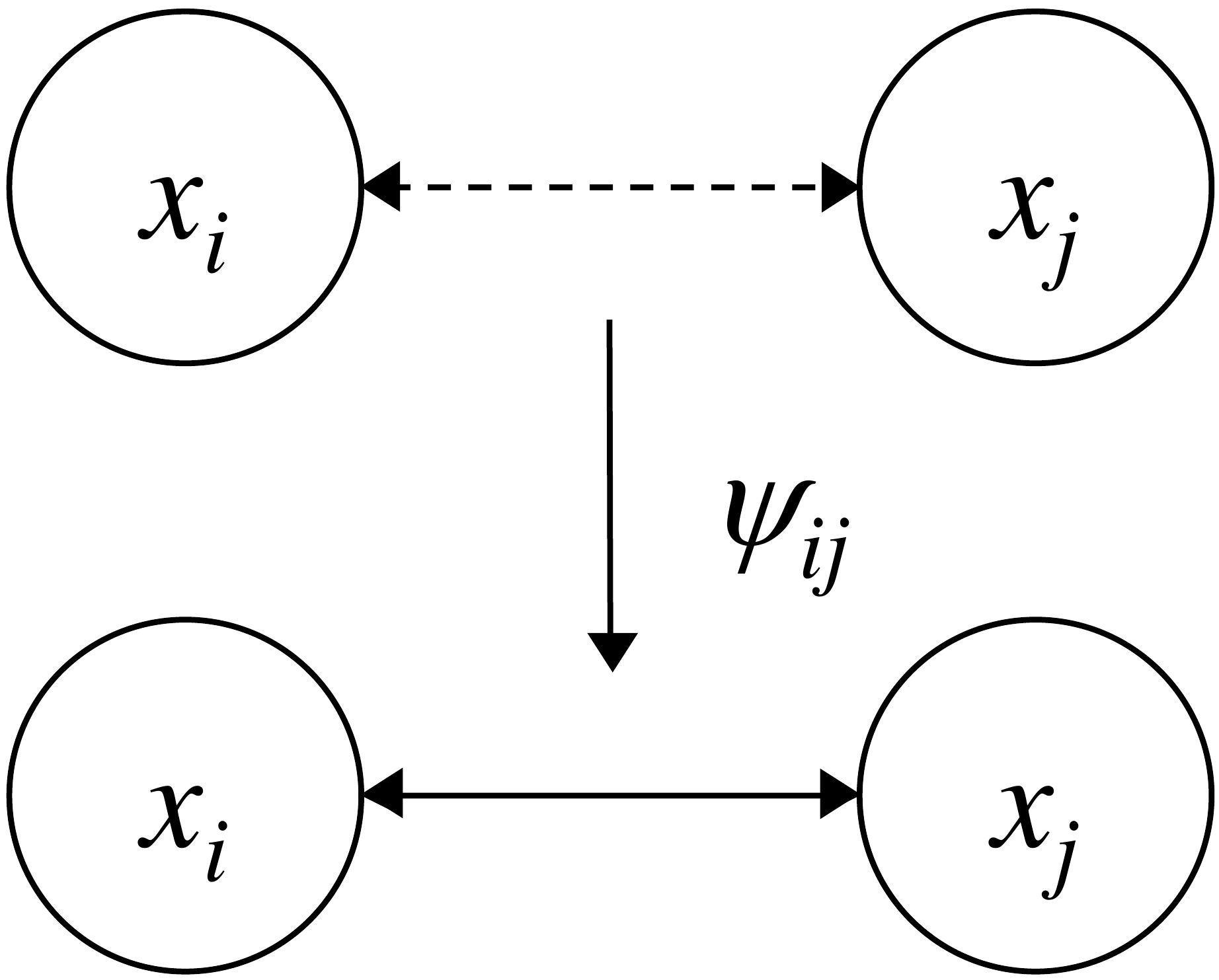}
\end{minipage}
\caption{Adaptively switching network.}
\label{fig:adaptive}
\end{figure}
Finally, Item e) follows from the assumption that $\mathcal G(t)$ is undirected,
although this is not an essential restriction and could be relaxed to account
for directed contact networks. Also, notice that we have included the term
$a_{ij}(0)$ in \eqref{eq:rewire} to guarantee that only those edges that were
present at the initial time $t = 0$ can be added later on by the rewiring
process.

\begin{remark}\label{rmk:}
{A model similar to the ASIS model proposed in this paper was studied in
\cite{Guo2013}, where it was assumed that
the initial graph~$\mathcal G_0$ is the complete graph. A major difference
between our model and the one in \cite{Guo2013} is the information available to each node. In the model in \cite{Guo2013}, it is assumed that nodes know the states of their neighbors. In contrast, we do not assume to have access to this knowledge in our model. This difference has a direct implication in the link-breaking process. For example, in \cite{Guo2013}, an
infected node does not break the edge between itself and its
infected neighbors. On the other hand, in our model, an infected node
will break edges independent of the state of its neighbors.}
\end{remark}

Once the adaptive network under consideration is described, we define the
exponential stability of the infection-free equilibrium $p_i(t) = 0$ of ASIS
models, as follows:

\begin{definition}
For $t\geq 0$ let $p_i(t) = \Pr(x_i(t) = 1)$ be the infection probability of
node~$i$. We say that the infection-free equilibrium $p_i(t) \equiv 0$ of the
adaptive SIS model $(x, \mathcal G)$ is \emph{exponentially stable}  if there
exist $K\geq 0$ and $\alpha > 0$ such that $p_i(t) \leq Ke^{-\alpha t}$ for all
$i$, $t$, and $x_i(0)$. We call $\alpha$ the \emph{decay rate}.
\end{definition}

In many practical situations, there is a cost associated to the mechanisms of cutting and rewiring edges in a network. Accordingly, we assume we have two scalar cost functions $f$ and $g$, defined on $[0, \infty)$, describing the cost associated to the rates of cutting and rewiring edges, respectively. The main purpose of this paper is to find the cost-optimal switching strategy, defined by the values of the cutting and rewiring rates, to drive the state of the spread towards the disease-free equilibrium at a given exponential rate. The total cost of a switching strategy is given by:
\begin{equation*}
C 
= 
\sum_{i=1}^nf(\phi_i) 
+  
\smashoperator[r]{\sum_{\{i, j\} \in \mathcal{E}_0}}\  g(\psi_{ij}). 
\end{equation*}
We also assume the following bounds on the rates: 
\begin{equation}\label{eq:bounds}
\ubar{\phi}\leq \phi_i \leq \bar{\phi}, \ 
\ubar{\psi}\leq \psi_{ij} \leq \bar{\psi}
\end{equation}
for some nonnegative numbers $\ubar{\phi}$, $\bar\phi$, $\ubar{\psi}$, and
$\bar{\psi}$. Now, we are ready to state the problem investigated in this
paper.\footnote{Since the design of infection and recovery rates have been
previously studied
in~\cite{Preciado2013,Preciado2013a,Preciado2013b,Preciado2014}, we focus our
attention on the design of $\phi_i$ and $\psi_{ij}$ only (although our framework
can be easily extended to include $\beta_i$ and~$\delta_i$ as additional design
variables). }

\begin{problem}\label{prb:}
Given $\alpha > 0$, find the cutting and rewiring rates $\phi_i$ and $\psi_{ij}$ satisfying \eqref{eq:bounds} such that the adaptive SIS model is exponentially stable with decay rate $\alpha$ and the total cost $C$ is minimized.
\end{problem}

In this paper we solve Problem~\ref{prb:} under the following reasonable assumption:

\begin{assumption}\label{assm:connected} 
$\mathcal G_0$ is strongly connected. Moreover, $\beta_i>0$, $\delta_i>0$, and $\psi_{ij} > 0$ for all $\{i, j\} \in \mathcal E_0$.
\end{assumption}

\section{Stability Analysis}\label{sec:stability}

In this section, we perform a stability analysis of the ASIS model $(x, \mathcal G)$ over $\mathcal G_0$. We begin by representing the model as a set of stochastic differential equations with Poisson counters. For $\gamma\geq 0$, we let $N_\gamma$ denote a Poisson counter with rate $\gamma$. We assume that all Poisson counters appearing in this paper are stochastically independent. We will use superscripts for the Poisson counters to distinguish those that has the same rates but are independent. Then, from \eqref{eq:infection} and \eqref{eq:recovery}, the evolution of the nodal states can be described as:
\begin{equation}\label{eq:dx_i}
dx_i = -x_i \dN_{\delta_i} + (1-x_i) \ \sum_{\mathclap{k\in\mathcal N_i(0)}}\ a_{ik}x_k \dN_{\beta_i}^{(k)}.
\end{equation}
Similarly, from \eqref{eq:cut} and \eqref{eq:rewire}, the evolution of the edges can be written as:
\begin{equation}\label{eq:da_ij}
da_{ij} = -a_{ij} (x_i \dN_{\phi_i}^{(j)} + x_{\!j} \dN_{\phi_{\!j}}^{(i)} )+ (1-a_{ij}) \dN_{\psi_{ij}},
\end{equation}
for all $i$ and $j$ such that $\{i, j\} \in \mathcal E_0$.

Using the stochastic differential equations~\eqref{eq:dx_i} and
\eqref{eq:da_ij}, we derive an upper bounding linear model for the infection
probabilities $p_i$. To state the linear model, we define the following
variables. Let us define $p(t) \in \mathbb{R}^n$ by $p = \col_{1\leq i\leq
n}p_i$. Also, for $i=1, \dotsc, n$ and $j\in \mathcal N_i(0)$, define $q_{ij}(t)
= E[a_{ij}(t)x_i(t)]$ and let $q_i = \col_{j\in \mathcal N_i(0)}q_{ij}$ and
{$q = \col_{1\leq i\leq n} q_i$}. Let $d_i$ denote the degree of node $i$
in the initial graph $\mathcal G_0$ and $m$ the number of the edges in $\mathcal
G_0$. Then, $q$ has the dimension $\sum_{i=1}^N d_i = 2m$. We also introduce the
following matrices. Define {$T_i\in\mathbb{R}^{1 \times (2m)}$} as the
unique matrix satisfying:
\begin{equation}\label{eq:def:T_i}
T_i q = \sum_{k \in \mathcal N_i(0)} q_{ki}. 
\end{equation}
Then define the matrices
$B_1 = \col_{1\leq i\leq n} (\beta_i T_i)$, 
$B_2 = \col_{1\leq i\leq n} (\beta_i \onev_{d_i} \otimes  T_i)$, 
$D_1 = \bigoplus_{i=1}^n\delta_i$, 
$D_2 = \bigoplus_{i=1}^n( \delta_i I_{d_i})$, 
$\Phi = \bigoplus_{i=1}^n (\phi_i I_{d_i})$, 
$\Psi_1 = \bigoplus_{i=1}^n \col_{j\in{\mathcal N}_i(0)}\psi_{ij}$, 
$\Psi_2 = \bigoplus_{i=1}^n 
\bigoplus_{{j\in\mathcal N_i(0)}}\;\psi_{ij}$.
Now, we can state the following theorem:

\begin{theorem}\label{thm:UpperBounds}
Define $M  \in \mathbb{R}^{(n+2m)\times(n+2m)}$ by
\begin{equation}\label{eq:M}
M  = 
\begin{bmatrix}
-D_1	&B_1
\\
\Psi_1	&B_2 - D_2 - \Phi - \Psi_2
\end{bmatrix}. 
\end{equation}
Then, for all $x_1(0)$, $\dotsc$, $x_n(0)$, it holds that
\begin{equation}\label{eq:d[pq]/dt}
\frac{d}{dt}\begin{bmatrix}
p \\ q
\end{bmatrix} 
\leq 
M 
\begin{bmatrix}
p \\ q
\end{bmatrix}. 
\end{equation}
\afterequation
\end{theorem}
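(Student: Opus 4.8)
The plan is to apply the expectation calculus for Poisson-driven jump processes to the two families of variables $x_i$ and $a_{ij}x_i$, and then to upper-bound the higher-order moments that appear by the linear coordinates $p$ and $q$. The basic tool is the rule that, if a process $f$ evolves by jumps triggered by independent Poisson counters, then $\frac{d}{dt}E[f]=\sum_\gamma \gamma\, E[\Delta_\gamma f]$, where $\Delta_\gamma f$ is the increment of $f$ caused by a firing of the counter $N_\gamma$. I would apply this rule twice: once to \eqref{eq:dx_i} to produce the $p$-block of \eqref{eq:d[pq]/dt}, and once to the products $a_{ij}x_i$ to produce the $q$-block.

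First I would take expectations in \eqref{eq:dx_i}. The recovery term yields $-\delta_i p_i$ exactly, while the infection term gives $\beta_i\sum_{k}E[(1-x_i)a_{ik}x_k]$. Since $(1-x_i)a_{ik}x_k\leq a_{ik}x_k$ pointwise (all variables are binary, hence nonnegative) and $a_{ik}=a_{ki}$ by undirectedness, this is bounded above by $\beta_i\sum_{k\in\mathcal N_i(0)}q_{ki}=\beta_i T_i q$ via the defining property \eqref{eq:def:T_i}. Stacking over $i$ gives $\dot p\leq -D_1 p+B_1 q$, which is exactly the first block row of $M$.

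The second and more delicate computation is for $q_{ij}=E[a_{ij}x_i]$. The key observation is that $x_i$ and $a_{ij}$ are driven by disjoint families of independent Poisson counters, so the two processes almost surely never jump simultaneously and the product rule reduces to $d(a_{ij}x_i)=a_{ij}\,dx_i+x_i\,da_{ij}$ with no quadratic-covariation term. Substituting \eqref{eq:dx_i}, \eqref{eq:da_ij} and collecting the jump from each counter (using $x_i^2=x_i$ throughout), I expect: the $\delta_i$- and $\phi_i$-firings each send $a_{ij}x_i$ to $0$, contributing $-(\delta_i+\phi_i)q_{ij}$; the $\psi_{ij}$-firing sets $a_{ij}\to 1$, contributing $\psi_{ij}E[(1-a_{ij})x_i]=\psi_{ij}(p_i-q_{ij})$; the $\phi_j$-firing contributes $-\phi_j E[a_{ij}x_ix_j]\leq 0$; and the $\beta_i$-firings contribute $\beta_i\sum_k E[a_{ij}(1-x_i)a_{ik}x_k]\leq \beta_i T_i q$ by the same pointwise bound as before.

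Dropping the nonnegative $\phi_j$-term and applying the two bounds then gives $\dot q_{ij}\leq \psi_{ij}p_i+\beta_i T_i q-(\delta_i+\phi_i+\psi_{ij})q_{ij}$, and assembling these inequalities over all $\{i,j\}\in\mathcal E_0$ with the block definitions of $\Psi_1$, $B_2$, $D_2$, $\Phi$, and $\Psi_2$ reproduces the second block row of $M$. The hard part will be the moment-closure issue: the \emph{exact} equations for $p$ and $q$ involve strictly higher-order moments such as $E[x_i a_{ik}x_k]$ and $E[a_{ij}x_ix_j]$, so the dynamics do not close on their own. The content of the theorem is precisely that each such uncontrolled term is nonnegative and may therefore be discarded, or bounded by a lower-order moment, without reversing the inequality. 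The care required is in verifying that every discarded term genuinely carries the correct sign and that the jump bookkeeping for the product $a_{ij}x_i$ is exhaustive, so that the resulting Metzler matrix $M$ dominates the true generator entrywise.
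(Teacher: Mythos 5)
Your proposal is correct and takes essentially the same route as the paper's proof: take expectations of the Poisson-driven dynamics for $x_i$ and for the products $a_{ij}x_i$ (your product-rule decomposition with no simultaneous jumps is exactly the It\^o rule for jump processes the paper invokes), then close the hierarchy by the pointwise bounds $E[(1-x_i)a_{ik}x_k]\leq E[a_{ik}x_k]=q_{ki}$ and $E[a_{ij}(1-x_i)a_{ik}x_k]\leq E[a_{ik}x_k]$ and by discarding the nonpositive term $-\phi_j E[a_{ij}x_ix_j]$. All steps match the paper's argument; there is no gap.
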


\begin{proof}
Taking the expectations in \eqref{eq:dx_i} yields that
\begin{equation*}
\frac{d}{dt}E[x_i] 
= 
-\delta_i E[x_i] + \beta_i\ \ \sum_{\mathclap{k\in\mathcal N_i(0)}}\ \  E[(1-x_i)a_{ik}x_k]. 
\end{equation*}
Since $E[(1-x_i)a_{ik}x_k] \leq E[a_{ik}x_k] = q_{ki}$, from the definition
of~$T_i$ in~\eqref{eq:def:T_i}, we obtain ${dp_i}/{dt} \leq -\delta_i p_i + 
\beta_i T_i q$. This implies that $dp\!/dt \leq -D_1 p + B_1 q$, which proves
the upper half block of the inequality~\eqref{eq:d[pq]/dt}.

Then, let us evaluate $dq/dt$. The It\^o rule for jump
processes~\cite{Brockett2008} yields that
\begin{equation*}
\begin{aligned}
d(a_{ij}x_i)
\begin{multlined}[t]
=-a_{ij} x_i \dN_{\phi_i}^{(j)} - a_{ij}x_ix_{\!j}\dN_{\phi_{\!j}}^{(i)} + (1-a_{ij})x_i \dN_{\psi_{ij}}
\\
\hspace{.5cm}- a_{ij}x_i \dN_{\delta_i} + a_{ij}(1-x_i)\ \sum_{\mathclap{k\in\mathcal N_i(0)}}\  a_{ik}x_k \dN_{\beta_i}^{(k)}. 
\end{multlined}
\end{aligned}
\end{equation*}
Taking expectations in this equation, we obtain
\begin{equation}\label{eq:dqij/dt=...}
\begin{multlined}
\frac{dq_{ij}}{dt}
=
-\phi_i E[a_{ij}x_i] - \phi_{\!j} E[a_{ij}x_ix_{\!j}] + 
\\
\psi_{ij} E[(1-a_{ij})x_i] -\delta_i q_{ij} + \beta_i\ \sum_{\mathclap{k\in\mathcal N_i(0)}}\, E[a_{ij}(1-x_i)a_{ik}x_k].
\end{multlined}
\end{equation}
Since $\sum_{k\in\mathcal N_i(0)} E[a_{ij}(1-x_i)a_{ik}x_k] \leq
\sum_{{k\in\mathcal N_i(0)}} E[a_{ik}x_k] = T_i q$, we obtain ${dq_{ij}}/{dt}
\leq \psi_{ij} p_i - (\phi_i +\psi_{ij}+\delta_i)q_{ij} + \beta_i
\onev^\top_{d_i} T_i q$ from \eqref{eq:dqij/dt=...}. Stacking the variables
$q_{ij}$ for all $j\in \mathcal N_i(0)$ yields ${dq_i}/{dt} \leq
\col_{{j\in\mathcal N_i(0)}}(\psi_{ij}p_i) - (\phi_i + \delta_i) q_i - \psi_j
q_i + \beta_i( \onev_{d_i}\otimes T_i ) q$, where $\psi_j =
\bigoplus_{j\in\mathcal N_i(0)} \psi_{ji}$. This proves the lower half block of
the inequality~\eqref{eq:d[pq]/dt} and completes the proof.
\end{proof}

From Theorem~\ref{thm:UpperBounds} we immediately have the following sufficient condition for  exponential stability of the infection-free equilibrium.

\begin{theorem}\label{thm:stability}
If $M$ is Hurwitz stable, then the infection-free equilibrium of the adaptive SIS model is exponentially stable with a decay rate $-\eta(M)$.
\end{theorem}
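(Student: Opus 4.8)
The plan is to combine the differential inequality from Theorem~\ref{thm:UpperBounds} with the assumption that $M$ is Hurwitz stable, exploiting the Metzler structure of $M$ to transfer stability of the linear comparison system to the nonlinear stochastic dynamics. First I would observe that the matrix $M$ defined in~\eqref{eq:M} is Metzler: its off-diagonal blocks $B_1$, $\Psi_1$, $B_2$, and the off-diagonal entries within the diagonal blocks are all nonnegative (since $\beta_i$, $\psi_{ij}$, $\phi_i$, $\delta_i \geq 0$ and the matrices $T_i$, $\Psi_1$, $\Psi_2$ collect these rates with nonnegative signs), while the negative contributions $-D_1$, $-D_2$, $-\Phi$, $-\Psi_2$ sit only on the diagonal. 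This Metzler property is the structural fact that makes the comparison argument work.

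Next I would set up the comparison system. Let $\xi(t) \in \mathbb{R}^{n+2m}$ solve the linear ODE $\dot{\xi} = M\xi$ with initial condition $\xi(0) = \col(p(0), q(0))$, which matches the initial data of the actual trajectory $\col(p(t), q(t))$. The key comparison lemma for Metzler matrices states that if $w(t) = \col(p(t), q(t))$ satisfies $\dot{w} \leq Mw$ entrywise (which is exactly~\eqref{eq:d[pq]/dt}) and $M$ is Metzler, then $w(t) \leq \xi(t)$ entrywise for all $t \geq 0$. The proof of this monotone-comparison step is the one place requiring care: one shows that $z(t) = \xi(t) - w(t)$ satisfies $\dot{z} \geq Mz$ with $z(0) \geq 0$, and then uses that the flow $e^{Mt}$ of a Metzler matrix is entrywise nonnegative to conclude $z(t) \geq 0$. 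I would either invoke this as a standard fact about positive systems or sketch it via the nonnegativity of $e^{Mt} = \lim_{k\to\infty}(I + (t/k)M)^k$ for $M$ Metzler and $t/k$ small.

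With the comparison established, the remaining work is to read off the exponential bound. Since $M$ is Hurwitz stable, every entry of $e^{Mt}$ decays no slower than $e^{\eta(M)t}$ up to a polynomial factor; more precisely, because $M$ is Metzler and Hurwitz, there exist $K \geq 0$ and $\alpha = -\eta(M) > 0$ such that $\|e^{Mt}\| \leq K e^{-\alpha t}$, giving $\xi(t) \leq K' e^{-\alpha t}\,\onev$ for a suitable constant $K'$ depending on $\xi(0)$. Combining with $0 \leq p_i(t) \leq w_i(t) \leq \xi_i(t)$ for each nodal component yields $p_i(t) \leq K' e^{-\alpha t}$, which is exactly the definition of exponential stability of the infection-free equilibrium with decay rate $-\eta(M)$. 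The main obstacle, as noted, is justifying the entrywise comparison $w(t) \leq \xi(t)$ rigorously; everything after that is a direct application of the Perron--Frobenius-type spectral bound for Hurwitz Metzler matrices together with the nonnegativity $p_i \geq 0$ inherited from the probabilistic meaning of $p_i = \Pr(x_i = 1)$.
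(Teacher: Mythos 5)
Your proposal is correct and is exactly the argument the paper leaves implicit: the paper gives no proof, stating that the theorem follows ``immediately'' from Theorem~\ref{thm:UpperBounds}, and the intended route is precisely your Metzler comparison (nonnegativity of $e^{Mt}$, hence $\col(p,q)(t)\leq e^{Mt}\col(p,q)(0)$, then the Hurwitz bound). The only point you gloss over is that obtaining the rate \emph{exactly} $-\eta(M)$ (rather than any $\alpha<-\eta(M)$) requires the dominant eigenvalue of $M$ to be semisimple, which here follows from Perron--Frobenius since $M$ is irreducible Metzler (Proposition~\ref{prop:irreducible}).
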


Before closing this section, we prove the following proposition that plays an important role in the rest of the paper.

\begin{proposition}\label{prop:irreducible}
The matrix $M$ is irreducible. 
\end{proposition}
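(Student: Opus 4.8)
The plan is to use the standard characterization that a square matrix is irreducible if and only if the directed graph $\mathcal{G}(M)$ associated with its nonzero entries is strongly connected. Here $\mathcal{G}(M)$ has one vertex for each of the $n+2m$ state variables, namely a vertex $p_i$ for each node $i$ and a vertex $q_{ij}$ for each ordered pair with $j\in\mathcal N_i(0)$, and a directed edge from one variable to another precisely when the corresponding off-diagonal entry of $M$ is nonzero. Since irreducibility depends only on the sparsity pattern and not on the signs of the entries, the diagonal blocks $-D_1$ and $B_2-D_2-\Phi-\Psi_2$ play no essential role, and it suffices to read off the edges contributed by the two off-diagonal blocks $B_1$ and $\Psi_1$.

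First I would translate these blocks into edges. From $B_1 = \col_{1\leq i\leq n}(\beta_i T_i)$ and the defining relation $T_i q = \sum_{k\in\mathcal N_i(0)} q_{ki}$, the row of $M$ governing $p_i$ is nonzero exactly in the columns $q_{ki}$, $k\in\mathcal N_i(0)$; because $\beta_i>0$ by Assumption~\ref{assm:connected}, this yields an edge $p_i\to q_{ki}$ for every neighbor $k$ of $i$, so that each $q$-vertex $q_{\alpha\beta}$ is reachable in one step from $p_\beta$. From $\Psi_1 = \bigoplus_{i=1}^n\col_{j\in\mathcal N_i(0)}\psi_{ij}$, the row governing $q_{ij}$ has a nonzero entry in column $p_i$, and since $\psi_{ij}>0$ by Assumption~\ref{assm:connected} this gives an edge $q_{ij}\to p_i$, so that each $q_{\alpha\beta}$ reaches $p_\alpha$ in one step.

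Next I would establish strong connectivity in two stages. Combining the two edge families, for any adjacent pair $\{i,j\}\in\mathcal E_0$ I obtain the path $p_i\to q_{ji}\to p_j$, so $p_i$ reaches $p_j$ whenever $i$ and $j$ are neighbors in $\mathcal G_0$. Since $\mathcal G_0$ is strongly connected by Assumption~\ref{assm:connected}, concatenating such two-step paths along any walk in $\mathcal G_0$ shows that $\{p_1,\dots,p_n\}$ lies in a single strongly connected component. It then remains only to attach the $q$-vertices: each $q_{\alpha\beta}$ reaches the $p$-cluster through $q_{\alpha\beta}\to p_\alpha$ and is reached from it through $p_\beta\to q_{\alpha\beta}$, so every $q$-vertex lies in the same component as the $p$-vertices. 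Hence $\mathcal{G}(M)$ is strongly connected and $M$ is irreducible.

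I expect the only real obstacle to be the index bookkeeping: one must keep straight that $T_i$ collects the variables $q_{ki}$ carrying $i$ as their \emph{second} index, while $\Psi_1$ couples $q_{ij}$ back to $p_i$ through its \emph{first} index, so that the head of the edge $p_\beta\to q_{\alpha\beta}$ and the tail of $q_{\alpha\beta}\to p_\alpha$ generically sit at different $p$-vertices. It is precisely this mismatch, resolved by the connectivity of $\mathcal G_0$, that stitches the whole graph together; no delicate estimate is needed, and the positivity hypotheses $\beta_i,\psi_{ij}>0$ of Assumption~\ref{assm:connected} are exactly what guarantee that the relevant entries are nonzero.
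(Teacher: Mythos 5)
Your proof is correct and follows essentially the same route as the paper: both pass to the directed graph of the nonzero pattern of $M$ and deduce its strong connectivity from that of $\mathcal G_0$, using the positivity of $\beta_i$ and $\psi_{ij}$ from Assumption~\ref{assm:connected}. The only difference is cosmetic --- the paper additionally uses the $q\to q$ edges contributed by the $B_2$ block to traverse $(p_i, q_{i_1 i_0}, \dotsc, q_{i_\ell i_{\ell-1}}, p_j)$ directly through $q$-vertices, whereas you return to a $p$-vertex after each $q$-vertex, which lets you ignore the diagonal blocks entirely.
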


\begin{proof}
Define 
\begin{equation*}
L = \begin{bmatrix}
O & T \\ J&S
\end{bmatrix}, 
\end{equation*}
where
\begin{equation}\label{eq:JTS}
J = \bigoplus_{i=1}^n\onev_{d_i},\ 
T = \col_{1\leq i\leq n}T_i,\ 
S = \col_{1\leq i\leq n}(\onev_{d_i}\otimes T_i). 
\end{equation}
Since $\beta_i$ and $\psi_{ij}$ are positive by Assumption~\ref{assm:connected}, if $M_{ij} = 0$, then $L_{ij} = 0$ for all distinct $i$ and $j$. From this we see that, to show the irreducibility of $M$, it is sufficient to show that $L$ is irreducible.

In order to show that $L$ is irreducible, we shall show that the directed graph $\mathcal H$, defined as the graph having adjacency matrix~$L$, is strongly connected. We identify the nodes $1, \dotsc, n+2m$ of $\mathcal H$ using the variables $p_1$, $\dotsc$, $p_n$, $q_{1j}$ ($j\in\mathcal{N}_1(0)$), $\dotsc$, $q_{nj}$ ({$j\in\mathcal{N}_n(0)$}). Then, the upper-right block~$T$ of the matrix $L$ shows that the graph $\mathcal H$ has directed edge~$(p_i, q_{ji})$ for all $i=1, \dotsc, n$ and $j \in \mathcal N_i(0)$. Similarly, from the matrices $J$ and $S$, we see that $\mathcal H$ has the edges~$(q_{ij}, p_i)$ and $(q_{ij}, q_{ki})$ for all $i=1, \dotsc, n$ and {$j, k \in \mathcal N_i(0)$}. Then, let us show that $\mathcal H$ has a directed path from $p_i$ to $p_j$ for all $i, j\in \{1, \dotsc, n\}$. Since $\mathcal G_0$ is strongly connected, it has a path $(i_0, \dotsc, i_\ell)$ such that $i_0 = i$ and $i_\ell = j$. Therefore, from the above fact, we can see that $\mathcal H$ contains the directed path $(p_i, q_{i_1, i_0}, q_{i_2, i_1}, \dotsc, q_{i_\ell, i_{\ell-1}}, p_j)$. In the same way, we can show that $\mathcal H$ also contains the directed path $(p_i, q_{ji}, q_{ij}, p_i)$ for every $\{i, j\} \in \mathcal E_0$. These two observations show that $\mathcal H$ is strongly connected and, hence, $L$ is irreducible.
\end{proof}

\section{Homogeneous Case}\label{sec:homo}

Based on the stability analysis presented in the previous section, we study the optimal design problem stated in Problem~\ref{prb:}. We start our analysis by assuming that the ASIS model is homogeneous, as defined below (this restriction is relaxed in the next section):

\begin{definition}
We say that the adaptive SIS model is \emph{homogeneous} if there exist nonnegative constants $\beta$, $\delta$, $\phi$, and $\psi$ such that $\beta_i = \beta$, $\delta_i = \delta$, $\phi_i = \phi$, and $\psi_{ij} = \psi$ for all $i$ and $j$.
\end{definition}

In the homogeneous case, the stability criterion in Theorem~\ref{thm:stability} reduces to the next simple condition.

\begin{theorem}\label{thm:stbl:homo}
Assume that the adaptive SIS model is homogeneous. Let $\rho = \eta(A_0)$. Then, the infection-free equilibrium of the adaptive SIS model is exponentially stable if
\begin{equation}\label{eq:delta>...}
\delta 
> 
\frac{\beta\rho - \phi-\psi}{2} + \frac{\sqrt{(\beta\rho+\phi+\psi)^2 - 4\beta\rho\phi}}{2}. 
\end{equation}
\afterequation
\end{theorem}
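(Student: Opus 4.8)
The plan is to invoke Theorem~\ref{thm:stability} and thereby reduce the claim to showing that, in the homogeneous case, the matrix $M$ in~\eqref{eq:M} is Hurwitz stable whenever~\eqref{eq:delta>...} holds. The first step is to specialize $M$: substituting $\beta_i=\beta$, $\delta_i=\delta$, $\phi_i=\phi$, $\psi_{ij}=\psi$ into the block definitions collapses them to $B_1=\beta T$, $B_2=\beta S$, $D_1=\delta I_n$, $D_2=\delta I_{2m}$, $\Phi=\phi I_{2m}$, $\Psi_1=\psi J$, $\Psi_2=\psi I_{2m}$, where $J$, $T$, $S$ are as in~\eqref{eq:JTS}, so that
\begin{equation*}
M = \begin{bmatrix} -\delta I_n & \beta T \\ \psi J & \beta S - (\delta+\phi+\psi)I_{2m}\end{bmatrix}.
\end{equation*}

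The technical heart of the argument is two structural identities, $S = JT$ and $TJ = A_0$, which I would prove directly from~\eqref{eq:def:T_i} and~\eqref{eq:JTS} by tracking the edge indexing: $T$ sums the incoming variables $q_{ki}$ at node $i$, $J$ copies $p_i$ onto every $q_{ij}$, and composing them reproduces either the row-repeated structure of $S$ or the adjacency pattern of $\mathcal G_0$. With these in hand, I would compute the characteristic polynomial of $M$ via a Schur complement of the $(1,1)$ block. Writing $a=\lambda+\delta$ and $b=\lambda+\delta+\phi+\psi$, and using $S=JT$ together with the Sylvester determinant identity $\det(tI_{2m}-JT)=t^{2m-n}\det(tI_n-TJ)$ and $TJ=A_0$ (so the spectrum of $JT$ is that of $A_0$ plus $2m-n$ zeros), one obtains
\begin{equation*}
\det(M-\lambda I) = (-1)^n(-b)^{2m-n}\prod_{\mu\in\operatorname{spec}(A_0)}\bigl(\beta\mu(a+\psi)-ab\bigr).
\end{equation*}
Thus the eigenvalues of $M$ are $\lambda=-(\delta+\phi+\psi)<0$ together with, for each eigenvalue $\mu$ of $A_0$, the roots in $a=\lambda+\delta$ of $a^2+(\phi+\psi-\beta\mu)a-\beta\mu\psi=0$.

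It then remains to analyze this quadratic and select the worst case. Since $\mathcal G_0$ is undirected, $A_0$ is symmetric with real eigenvalues, and by Assumption~\ref{assm:connected} together with Perron--Frobenius its largest eigenvalue is $\rho=\eta(A_0)>0$. For $\mu\le 0$ the quadratic has nonpositive root sum and nonnegative root product, so both roots have real part $\le 0$ and hence $\operatorname{Re}(\lambda)<0$ because $\delta>0$. For $\mu>0$ the product of roots $-\beta\mu\psi\le 0$ forces the only possibly-positive contribution to be the larger root $a_+(\mu)$; an implicit-function computation gives $da_+/d\mu=\beta(a_++\psi)/\sqrt{\Delta}>0$ (using $\beta>0$), so $a_+$ is strictly increasing and the binding constraint is at $\mu=\rho$. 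Finally, rearranging the discriminant $(\phi+\psi-\beta\mu)^2+4\beta\mu\psi=(\beta\mu+\phi+\psi)^2-4\beta\mu\phi$ shows that $a_+(\rho)<\delta$ is exactly condition~\eqref{eq:delta>...}. Under that condition every eigenvalue of $M$ has negative real part, so $M$ is Hurwitz and Theorem~\ref{thm:stability} yields the conclusion.

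I expect the main obstacle to be the bookkeeping: establishing $S=JT$ and $TJ=A_0$ cleanly from the edge-indexed definitions, and carrying the Sylvester identity through the Schur-complement computation without sign or dimension errors. The monotonicity of $a_+(\mu)$, which pins the worst case to $\mu=\rho$, is the other step that needs care (the cleaner alternative framing is to note that $M$ is Metzler and irreducible by Proposition~\ref{prop:irreducible}, so $\eta(M)$ is the Perron root and corresponds to the Perron eigenvector of $A_0$, i.e.\ to $\mu=\rho$).
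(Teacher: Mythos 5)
Your proof is correct, but it takes a genuinely different route from the paper's. The paper never computes the full spectrum of $M$: it takes a positive Perron eigenvector $v$ of the irreducible matrix $A_0$, forms $w=\col_{1\leq i\leq n}(v_i\onev_{d_i})$, verifies $Tw=\rho v$, $Sw=\rho w$, $Jv=w$, and then seeks $c\geq 0$ and $\lambda$ making $\col(cv,w)$ an eigenvector of $M$; because $M$ is Metzler and irreducible (Proposition~\ref{prop:irreducible}), a nonnegative eigenvector certifies that its eigenvalue equals $\eta(M)$, giving $\eta(M)=\lambda_+$ directly. This is precisely the ``cleaner alternative framing'' you mention in your last sentence, and it is what lets the paper bypass your two hardest steps: establishing $S=JT$ and $TJ=A_0$ and pushing them through the Schur-complement/Sylvester computation of $\det(M-\lambda I)$, and the monotonicity argument pinning the worst case to $\mu=\rho$. (Both identities do hold, your determinant formula checks out, and the quadratic $a^2+(\phi+\psi-\beta\mu)a-\beta\mu\psi=0$ you reach is exactly the one the paper solves at $\mu=\rho$; note only that the Schur complement requires $a=\lambda+\delta\neq 0$, so the identity between characteristic polynomials should be extended to all $\lambda$ by continuity, and that $\Delta>0$ for $\mu>0$ since Assumption~\ref{assm:connected} gives $\psi>0$, so your implicit-function step is safe.) What your route buys in exchange is more information --- the complete spectrum of $M$ in terms of $\operatorname{spec}(A_0)$, namely $-(\delta+\phi+\psi)$ with multiplicity $2m-n$ plus two roots per eigenvalue of $A_0$ --- and independence from Perron--Frobenius theory for Metzler matrices, using only the symmetry of $A_0$ and elementary quadratic analysis. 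A minor bonus of the paper's approach is that it delivers the exact equality $\eta(M)=\lambda_+$ in the form reused later in Theorem~\ref{thm:optim:homo}, though your argument yields the same equality.
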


\begin{proof}
Assume that the model is homogeneous. Then, the matrix $M$ defined in \eqref{eq:M} takes the form
\begin{equation*}
M = 
\begin{bmatrix}
-\delta I	& \beta T
\\
\psi J		& \beta S - (\delta + \phi + \psi) I
\end{bmatrix}, 
\end{equation*}
where the matrices $J$, $T$, and $S$ are defined by \eqref{eq:JTS}. We prove the
theorem under the assumption that $\beta\rho \neq \phi$. Since $\mathcal G_0$ is
strongly connected by Assumption~\ref{assm:connected}, $A_0$ is irreducible and
therefore has a positive eigenvector $v$ corresponding to the eigenvalue $\rho$
(see \cite{Farina2000}). Define the positive vector $w = \col_{1\leq i\leq
n}(v_i \onev_{d_i})$. Then, the definition of~$T_i$ shows $T_iw =
\sum_{k\in\mathcal N_i(0)} w_{ki} = \sum_{k\in\mathcal N_i(0)} v_k = (Av)_i =
\rho v_i$ and therefore $Tw = \lambda v$. In the same manner, we can show
{$Sw = \rho w$}. Since we have $Jv = w$,  for a nonnegative number $c$ it
follows that
\begin{equation}\label{eq:EigEq}
M
\begin{bmatrix}
cv\\w
\end{bmatrix}
=
\begin{bmatrix}
(\beta\rho - c\delta)v
\\
\left(c\psi + \beta \rho - (\delta + \phi + \psi)\right) w
\end{bmatrix}. 
\end{equation}
Hence, if a real number $\lambda$ satisfies the following equations:
\begin{equation}\label{eq:uations}
\beta\rho - c\delta = c\lambda,\ 
c\psi + \beta \rho - (\delta + \phi + \psi) = \lambda, 
\end{equation}
then, by \eqref{eq:EigEq}, we see that the nonnegative vector $\col(cv,w)$ is an eigenvector of the irreducible and Metzler matrix~$M$ corresponding to the eigenvalue $\lambda$. This implies that {$\eta(M) = \lambda$} (see \cite[Theorem~17]{Farina2000}). Therefore, the condition~{$\lambda <0$} is sufficient for exponential stability of the adaptive SIS model by Theorem~\ref{thm:stability}.

To find such $\lambda$, we solve \eqref{eq:uations} with respect to $\lambda$ and obtain $\lambda^2 + (2\delta + \phi + \psi - \beta\rho)\lambda + \delta(\delta+\phi+\psi)-\beta\rho(\delta + \psi)= 0$. This equation is satisfied by $\lambda = \lambda_+$, where
\begin{equation*}
\lambda_+ = \frac{\beta\rho-2\delta - \phi -\psi + \sqrt{(\beta\rho+\phi+\psi)^2 - 4\beta\rho\phi}}{2}. 
\end{equation*}
Then, the pair $(c, \lambda) = (\beta\rho/(\lambda_++\delta), \lambda_+)$
satisfies \eqref{eq:uations}. We remark that $\lambda_+ +\delta$ is positive
because of the initial assumption $\beta\rho \neq \phi$. Therefore, $c\geq 0$
and hence the above argument shows that
\begin{equation}\label{eq:eta=lam_+}
\eta(M) = \lambda_+. 
\end{equation}
Therefore, by Theorem~\ref{thm:stability}, the infection-free equilibrium of the adaptive SIS model is exponentially stable if $\lambda_+ < 0$, which is equivalent to \eqref{eq:delta>...}. 
\end{proof}

\begin{remark}
In the special case when the network does not adapt to the prevalence of infection, i.e., when $\phi=0$, Proposition~\ref{thm:stbl:homo} recovers the well-known stability condition $\delta > \beta \rho(A_0)$ for the SIS models over static networks \cite{Lajmanovich1976,VanMieghem2009a}. 
\end{remark}

The following theorem provides a solution to Problem~\ref{prb:}, in the homogeneous case:

\begin{theorem}\label{thm:optim:homo}
Assume that the adaptive SIS model is homogeneous. Let $\phi$ and $\psi$ be the solutions of the optimization problem:
\begin{align}
\minimize_{\phi,\psi}\ & nf(\phi) + mg(\psi)\notag
\\
\subjectto\ & \phi \geq (\beta\eta - \delta + 1)\left({\psi}/{(\delta-\alpha)} + 1\right), 
\label{eq:subjectto:homo}
\\
&\ubar{\phi} \leq \phi \leq \bar{\phi}, \ 
\ubar{\psi} \leq \psi \leq \bar{\psi}. \notag
\end{align}
Then, the pair $(\phi, \psi)$ gives the solution of Problem~\ref{prb:}.
\end{theorem}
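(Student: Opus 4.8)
The plan is to show that, in the homogeneous case, Problem~\ref{prb:} is \emph{identical} to the optimization problem displayed in the statement, so that any minimizer of the latter is a solution of the former. First I would rewrite the objective and the constraints. Since the model is homogeneous, $\phi_i\equiv\phi$ and $\psi_{ij}\equiv\psi$, so the total cost collapses to $C=\sum_{i=1}^n f(\phi)+\sum_{\{i,j\}\in\mathcal E_0}g(\psi)=nf(\phi)+mg(\psi)$, which is exactly the objective of the program in the theorem, and the box constraints \eqref{eq:bounds} reduce to $\ubar\phi\le\phi\le\bar\phi$ and $\ubar\psi\le\psi\le\bar\psi$. Thus the whole content of the theorem reduces to proving that the requirement ``exponentially stable with decay rate $\alpha$'' is equivalent to the single scalar inequality \eqref{eq:subjectto:homo}.

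For that equivalence I would proceed as follows. By Theorem~\ref{thm:stability} the guaranteed decay rate equals $-\eta(M)$, so the requirement is $\eta(M)\le-\alpha$. Invoking \eqref{eq:eta=lam_+}, which identifies $\eta(M)=\lambda_+$, this becomes $\lambda_+\le-\alpha$. To locate the boundary, I would substitute $\lambda=-\alpha$ into the eigenvalue relations \eqref{eq:uations}: the first relation gives $c=\beta\rho/(\delta-\alpha)$, valid under the natural condition $\alpha<\delta$, and substituting this value of $c$ into the second relation and solving for $\phi$ produces exactly the curve appearing in \eqref{eq:subjectto:homo}. This pins down the set $\{(\phi,\psi):\lambda_+=-\alpha\}$ as the boundary of the feasible region.

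To convert the boundary equality into the inequality \eqref{eq:subjectto:homo}, I would use a monotonicity argument. Writing $M=M(\phi)$, the rate $\phi$ enters $M$ only through the diagonal block $-(\delta+\phi+\psi)I$, so increasing $\phi$ lowers the diagonal of $M$ and leaves the nonnegative off-diagonal entries fixed; hence $\phi_1\le\phi_2$ implies $M(\phi_2)\le M(\phi_1)$ entrywise. Because $M$ is Metzler, its spectral abscissa is monotone, so $\phi\mapsto\lambda_+=\eta(M)$ is nonincreasing for fixed $\psi$. Consequently $\lambda_+\le-\alpha$ holds precisely when $\phi$ lies at or above the boundary value, that is, exactly when \eqref{eq:subjectto:homo} holds. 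Combining the three steps shows the two programs have identical feasible sets and identical objectives, so their minimizers coincide.

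The step I expect to be most delicate is the equivalence $\{\eta(M)\le-\alpha\}\Leftrightarrow\eqref{eq:subjectto:homo}$, and in particular the monotonicity direction: one must confirm that $\eta(M)$ is strictly decreasing (or at least nonincreasing) in $\phi$ so that the boundary curve genuinely separates feasible from infeasible points, and one must carry along the sign condition $\delta>\alpha$ that makes $c\ge0$, equivalently $\lambda_++\delta>0$, which is what certifies $\col(cv,w)$ as a legitimate nonnegative Perron eigenvector. A secondary point is to verify that $-\alpha$ lies to the right of the parabola's vertex, so that $\lambda_+$ rather than $\lambda_-$ is the root being constrained; this follows from the same sign bookkeeping already used in the proof of Theorem~\ref{thm:stbl:homo}.
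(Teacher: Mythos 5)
Your proposal is correct and takes essentially the same route as the paper: the paper's proof is a one-line reduction to the claim that $\eta(M)\le-\alpha$ holds if and only if \eqref{eq:subjectto:homo} holds, deduced from the identity $\eta(M)=\lambda_+$ in \eqref{eq:eta=lam_+}, and you simply supply the omitted algebra (setting $\lambda=-\alpha$ in \eqref{eq:uations} to get the boundary, then using monotonicity of the Metzler spectral abscissa in $\phi$, which the irreducibility of $M$ from Proposition~\ref{prop:irreducible} upgrades to the strict monotonicity you correctly flag as necessary). One small caveat: your substitution actually produces the boundary $\phi=(\beta\rho-\delta+\alpha)\left(\psi/(\delta-\alpha)+1\right)$, which agrees with \eqref{eq:subjectto:homo} only after reading the printed $\eta$ as $\rho$ and the printed $+1$ as $+\alpha$, apparently typographical slips in the statement rather than a gap in your argument.
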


\begin{proof}
It is sufficient to show that $\eta(M) \leq -\alpha$ if and only if the \eqref{eq:subjectto:homo} holds, but this easily follows from \eqref{eq:eta=lam_+}.
\end{proof}
%
%\begin{remark}
%The optimization problem in Theorem~\ref{thm:optim:homo} is clearly a convex optimization problem when both $f$ and $g$ are convex functions.
%\end{remark}

\section{Heterogeneous Case}\label{sec:hetero}
In this section, we extend our analysis to non-homogeneous adaptive SIS models. We will show that Problem~\ref{prb:} can be effectively solved under the following assumption.

\begin{assumption}\ \label{assm:hetero}
\begin{enumerate}
\item The values of $\psi_{ij}$ are given for every  $\{i, j\} \in \mathcal E_0$;

\item There exist constants $r > \bar{\phi}$ and $s$ such that the function~$F \colon [r - \bar{\phi}, r - \ubar{\phi}] \to \mathbb{R}\colon x \mapsto s + f(r - x)$ is a posynomial function.
\end{enumerate}
\afterequation
\end{assumption}

In order to state the main result of this section, we will need the next proposition.

\begin{proposition}\label{prop:tildeA}
Let $\bar \delta = \max_i \delta_i$ and define {$\tilde \delta_i = \bar \delta - \delta_i$}. Similarly, let $\bar{\psi} = \max_{i, j}\psi_{ij}$ and define {$\tilde{\psi}_{ij} = \bar{\psi} - \psi_{ij}$}.  Let $\tilde \phi_1$, $\dotsc$, $\tilde \phi_n$ be real numbers. Define the matrices {$\tilde D_1 = \bigoplus_{i=1}^n \tilde \delta_i$}, $\tilde D_2 = \bigoplus_{i=1}^n (\tilde \delta_i I_{d_i})$, $\tilde \Phi = \bigoplus_{i=1}^n (\tilde \phi_i I_{d_i})$, and {$\tilde \Psi_2 = \bigoplus_{i=1}^n \bigoplus_{j\in\mathcal N_i(0)} \tilde \psi_{ij}$}. Define the nonnegative matrix
\begin{equation*}
\tilde M
=
\begin{bmatrix}
\tilde D_1 + \bar \psi I + r I	& B_1
\\
\Psi_1							& B_2 + \tilde D_2 + \tilde \Phi + \tilde \Psi_2
\end{bmatrix}. 
\end{equation*}
Then, for a given $\alpha > 0$, the following statements are equivalent:
\begin{itemize}
\item There exist $\phi_1, \dotsc, \phi_n \in [\ubar{\phi}, \bar{\phi}]$ such that $\eta(M) \leq  -\alpha$.

\item There exist $\tilde \phi_1, \dotsc, \tilde \phi_n \in [r-\bar{\phi}, r-\ubar{\phi}]$ such that $\eta(\tilde M) + \alpha \leq {\psi_0} + \bar \delta + r$.
\end{itemize}
Moreover, between $\{\phi_i\}_{i=1}^n$ and $\{\tilde \phi_i\}_{i=1}^n$, there is a one-to-one correspondence given by the equation
\begin{equation}\label{eq:phi<->til_phi}
\phi_i = r- \tilde  \phi_i. 
\end{equation}
\afterequation
\end{proposition}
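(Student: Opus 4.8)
The plan is to recognize that $\tilde M$ is nothing but the matrix $M$ from~\eqref{eq:M} shifted by a scalar multiple of the identity, so that the spectral condition $\eta(M) \le -\alpha$ translates directly into the stated condition on $\eta(\tilde M)$. First I would rewrite each transformed block in terms of the original one, using $\tilde\delta_i = \bar\delta - \delta_i$, $\tilde\psi_{ij} = \bar\psi - \psi_{ij}$, and the proposed correspondence $\phi_i = r - \tilde\phi_i$ from~\eqref{eq:phi<->til_phi}. At the level of block-diagonal matrices these read $\tilde D_1 = \bar\delta I - D_1$, $\tilde D_2 = \bar\delta I - D_2$, $\tilde\Psi_2 = \bar\psi I - \Psi_2$, and $\tilde\Phi = rI - \Phi$.

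Substituting these into the definition of $\tilde M$ and comparing block by block with~\eqref{eq:M}, I expect to obtain
\[
\tilde M = M + (\bar\delta + \bar\psi + r)\,I.
\]
Indeed, the upper-left block becomes $\tilde D_1 + \bar\psi I + rI = -D_1 + (\bar\delta + \bar\psi + r)I$; the lower-right block becomes $B_2 + \tilde D_2 + \tilde\Phi + \tilde\Psi_2 = (B_2 - D_2 - \Phi - \Psi_2) + (\bar\delta + \bar\psi + r)I$; and the off-diagonal blocks $B_1$ and $\Psi_1$, which involve only the fixed quantities $\beta_i$ and the given $\psi_{ij}$, are untouched by the transformation and hence identical in $M$ and $\tilde M$. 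Along the way the asserted nonnegativity of $\tilde M$ falls out: $M$ is Metzler, so its off-diagonal entries, and therefore those of $\tilde M$, are nonnegative, while each diagonal entry of $\tilde M$ equals the corresponding diagonal entry of $M$ plus $\bar\delta + \bar\psi + r$, which is nonnegative because $\tilde\delta_i,\tilde\psi_{ij}\ge 0$ and $r > \bar\phi \ge \phi_i$ by Assumption~\ref{assm:hetero}.

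Given the shift identity, the eigenvalues of $\tilde M$ are exactly those of $M$ translated by $\bar\delta + \bar\psi + r$, so $\eta(\tilde M) = \eta(M) + \bar\delta + \bar\psi + r$. Hence, for any tuple related by $\phi_i = r - \tilde\phi_i$, the inequality $\eta(M)\le -\alpha$ is equivalent to $\eta(\tilde M) + \alpha \le \bar\delta + \bar\psi + r$, which is the right-hand side appearing in the second bullet. Finally, since the affine map $\phi\mapsto r-\phi$ is a decreasing bijection carrying $[\ubar\phi,\bar\phi]$ onto $[r-\bar\phi,\,r-\ubar\phi]$, a choice of $\phi_1,\dotsc,\phi_n$ in the former box corresponds bijectively to a choice of $\tilde\phi_1,\dotsc,\tilde\phi_n$ in the latter; combining this with the equivalence of the two scalar inequalities yields the equivalence of the two existence statements together with the correspondence~\eqref{eq:phi<->til_phi}.

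The only delicate point, and the step I would carry out most carefully, is the block-by-block verification of the shift identity — in particular confirming that the diagonal of $B_2$ contributes nothing to the diagonal of the lower-right block (so that entry is exactly $-(\delta_i+\phi_i+\psi_{ij})$ before the shift), and that $B_1$ and $\Psi_1$ are genuinely invariant under the change of variables. Everything downstream is a routine eigenvalue translation and an affine reparametrization of the feasible box.
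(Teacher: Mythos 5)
Your proof is correct and follows exactly the paper's route: the paper likewise observes the shift identity $\tilde M = M + (\bar\delta + r + \bar\psi)I$ under the substitution $\phi_i = r - \tilde\phi_i$ and concludes by translating eigenvalues and noting the bijection between the boxes $[\ubar\phi,\bar\phi]$ and $[r-\bar\phi, r-\ubar\phi]$. Your block-by-block verification and the nonnegativity check (including that $B_2$ has zero diagonal, and correctly reading the $\psi_0$ in the second bullet as $\bar\psi$) simply make explicit what the paper leaves to the reader.
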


\begin{proof}
Assume that there exist $\phi_1, \dotsc, \phi_n \in [\ubar{\phi}, \bar{\phi}]$ satisfying $\eta(M) \leq  -\alpha$. Define $\tilde \phi_i$ by \eqref{eq:phi<->til_phi}. Then we see that $\tilde M = M + (\bar{\delta} + r + \bar \psi)I$. This implies $\eta(\tilde M) + \alpha \leq \bar{\delta} + r + {\bar \psi}$. We also have $\tilde \phi_i\in [r-\bar{\phi}, r-\ubar{\phi}]$. The other direction can be shown in the same way and, hence, its proof is omitted.
\end{proof}

Using Proposition~\ref{prop:tildeA}, we can reduce Problem~\ref{prb:} to a geometric program under Assumption~\ref{assm:hetero}, as stated in the following theorem:

\begin{theorem}
Let $\tilde \phi_1$, $\dotsc$, $\tilde \phi_n$ be the solutions to the following geometric program:
\begin{subequations}\label{eq:GP}
\begin{align}
&\minimize_{\tilde{\phi_i},\,v}\ 
\sum_{i=1}^n F(\tilde \phi_i)\label{dummy:i3c}
\\
&\subjectto\ 
 \left(\tilde M + \alpha I\right) v \leq 
 \left(\bar \delta +r + \bar \psi\right)v,\hspace{.5cm} \label{eq:PF1}
\\
& \phantom{\subjectto}\ v > 0,  \label{eq:PF2}
\\
& \phantom{\subjectto}\ r - \bar \phi \leq \tilde \phi_i \leq r - \ubar \phi. \label{dummy:k1-}
\end{align}
\end{subequations}
Then, $\{\phi_i\}_{i=1}^n$, defined in \eqref{eq:phi<->til_phi} solve Problem~\ref{prb:}.
\end{theorem}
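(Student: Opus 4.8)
The plan is to show that the geometric program~\eqref{eq:GP} is an exact reformulation of Problem~\ref{prb:} under Assumption~\ref{assm:hetero}, with the two feasible sets and the two objectives identified through the change of variables~\eqref{eq:phi<->til_phi}. First I would reduce Problem~\ref{prb:} to its operative form. By item~1) of Assumption~\ref{assm:hetero} the rewiring rates $\psi_{ij}$ are fixed, so the term $\sum_{\{i,j\}\in\mathcal E_0} g(\psi_{ij})$ is constant and minimizing the total cost is the same as minimizing $\sum_{i=1}^n f(\phi_i)$. By Theorem~\ref{thm:stability}, the specification that the model be exponentially stable with decay rate $\alpha$ is met whenever $\eta(M)\leq-\alpha$, so I adopt this as the stability constraint. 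Finally, using the definition of $F$ in item~2) together with $\phi_i=r-\tilde\phi_i$ gives $f(\phi_i)=F(\tilde\phi_i)-s$, whence $\sum_i f(\phi_i)=\sum_i F(\tilde\phi_i)-ns$; the objectives differ by a constant and share their minimizers, matching~\eqref{dummy:i3c}.

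Next I would transfer the feasible set to the $\tilde\phi_i$ variables. The proof of Proposition~\ref{prop:tildeA} records the pointwise identity $\tilde M=M+(\bar\delta+r+\bar\psi)I$, hence $\eta(\tilde M)=\eta(M)+\bar\delta+r+\bar\psi$. Under the bijection~\eqref{eq:phi<->til_phi} this turns $\eta(M)\leq-\alpha$ into $\eta(\tilde M)+\alpha\leq\bar\delta+r+\bar\psi$, and the box $\phi_i\in[\ubar\phi,\bar\phi]$ into $\tilde\phi_i\in[r-\bar\phi,r-\ubar\phi]$, i.e.\ into~\eqref{dummy:k1-}. It therefore remains to rewrite the spectral inequality $\eta(\tilde M)+\alpha\leq\bar\delta+r+\bar\psi$ as~\eqref{eq:PF1}--\eqref{eq:PF2}.

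This last step is where the main difficulty lies, and I would handle it with Perron--Frobenius theory. The matrix $\tilde M$ is nonnegative by construction, and it is irreducible: differing from $M$ only by a scalar multiple of the identity, it inherits the off-diagonal pattern of $M$, which is irreducible by Proposition~\ref{prop:irreducible}. Thus $\tilde M+\alpha I$ is nonnegative and irreducible, its Perron root equals $\eta(\tilde M)+\alpha$, and its right Perron eigenvector is strictly positive. I would then invoke the Collatz--Wielandt characterization: for such a matrix the Perron root is at most $\mu:=\bar\delta+r+\bar\psi$ if and only if there exists $v>0$ with $(\tilde M+\alpha I)v\leq\mu v$, which is exactly~\eqref{eq:PF1}--\eqref{eq:PF2}. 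The ``only if'' direction follows by taking $v$ to be the positive right Perron eigenvector, for which $(\tilde M+\alpha I)v=(\eta(\tilde M)+\alpha)v\leq\mu v$; the ``if'' direction follows by pairing the inequality with the positive left Perron eigenvector $u$ to obtain $(\eta(\tilde M)+\alpha)\,u^\top v\leq\mu\,u^\top v$ and dividing by $u^\top v>0$. This is the same positive-eigenvector reasoning already used for the irreducible Metzler matrix $M$ in the proof of Theorem~\ref{thm:stbl:homo} (cf.\ \cite{Farina2000}).

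To conclude I would check that~\eqref{eq:GP} is a genuine geometric program. Since $r>\bar\phi$, each $\tilde\phi_i$ is positive, and the decision variables enter $\tilde M$ only through the diagonal block $\tilde\Phi$; hence every entry of $(\tilde M+\alpha I)v$ is a sum of monomials in the positive variables $\{\tilde\phi_i\}$ and $\{v_k\}$, so~\eqref{eq:PF1} is a constraint of the form (posynomial)~$\leq$~(monomial) and~\eqref{dummy:k1-} reduces to two monomial bounds, while the objective is posynomial by Assumption~\ref{assm:hetero}. Combining the three steps, the minimizer $\{\tilde\phi_i\}$ of~\eqref{eq:GP} and the associated $\{\phi_i\}$ from~\eqref{eq:phi<->til_phi} solve Problem~\ref{prb:}. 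The crux is the Perron--Frobenius equivalence between the nonconvex eigenvalue bound and the linear system in the auxiliary variable $v$; introducing $v$ is precisely what converts the spectral constraint into a posynomial one and makes the reduction to a geometric program possible.
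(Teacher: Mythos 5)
Your proposal is correct and follows essentially the same route as the paper: reduce Problem~\ref{prb:} via Proposition~\ref{prop:tildeA} and the change of variables~\eqref{eq:phi<->til_phi}, convert the spectral constraint $\eta(\tilde M)+\alpha\leq\bar\delta+r+\bar\psi$ into \eqref{eq:PF1}--\eqref{eq:PF2} using irreducibility (Proposition~\ref{prop:irreducible}) and Perron--Frobenius, and verify the posynomial structure. The only difference is that you spell out the Collatz--Wielandt argument and the GP verification that the paper delegates to~\cite{Preciado2014}, which is a welcome addition rather than a departure.
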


\begin{proof}
By Proposition~\ref{prop:tildeA}, Problem~\ref{prb:} is equivalent to the optimization problem
\begin{align}
\minimize_{\tilde \phi_i}\ 
&\sum_{i=1}^n f(r- \tilde{\phi_i}) \notag
\\
\subjectto\ 
& \eta(\tilde M+\alpha I)  \leq  \bar \delta + r + {\bar \psi}, \label{eq:eta(tildeM...}
\\
&r - \bar \phi \leq \tilde \phi_i \leq r - \ubar \phi \notag,
\end{align}
after the change of variables~\eqref{eq:phi<->til_phi}. Minimizing the objective
function in this problem is equivalent to minimizing the one in \eqref{eq:GP} by
the definition of $F$, whose constant term $s$ can be ignored in the
optimization. Then, since $\tilde M + \alpha I$ is irreducible by
Proposition~\ref{prop:irreducible}, we can replace the constraint
\eqref{eq:eta(tildeM...} into \eqref{eq:PF1} and \eqref{eq:PF2} in the same way
as in~\cite{Preciado2014} using Perron-Frobenius lemma. Also, by a similar
argument as in~\cite{Preciado2014}, we can show that \eqref{eq:GP} is a
geometric program. This is because $F$ is a posynomial and each entry of the
matrix~{$\tilde M + \alpha I$} is a posynomial in the variables
$\tilde{\phi}_1$, $\dotsc$, $\tilde{\phi}_n$. The details are omitted.
\end{proof}

\begin{remark}
When $\psi_{ij}$ are also design variables, the above argument reduces Problem~\ref{prb:} to a signomial program, which are (in general) hard to solve~\cite{Boyd2007}.
\end{remark}

\section{Numerical Results} 

We illustrate our results with a numerical example. Let $\mathcal G_0$ be the
graph of a social network of $n=247$ nodes and {$m = 940$} edges.  The
adjacency matrix of the graph has spectral radius $\rho = 13.53$. We assume that
all nodes have identical recovery rate $\delta = 0.1$ and infection rate
{$\beta = \delta/(1.1 \rho) =6.720 \times 10^{-3}$}. Since $\delta/\beta =
(1.1) \rho > \rho$, Theorem~\ref{thm:stbl:homo} does not guarantee the stability
of the infection-free equilibrium when $\phi = 0$, i.e., when the network does
not adapt.

Let us design the cost-optimal cutting rates so that the spread stabilizes towards the disease-free equilibrium in the adaptive network. We assume $\ubar \phi = 0$, $\bar \phi = 4\beta$, and $\psi_{ij} = \beta$ and use the following cost function in our numerical simulations:
\begin{equation*} %\label{eq:def:f}
f(x) 
= 
\frac{(r-x)\mathstrut^{-1} - (r-\ubar{\phi})^{-1}}{(r-\bar{\phi})^{-1} - (r-\ubar{\phi})^{-1}}. 
\end{equation*}
We have chosen this function since it is increasing and presents diminishing returns. Also, we have normalized it, so that $f(\ubar{\phi}) = 0$ and $f(\bar{\phi}) = 1$, and fixed $r = 2\bar{\phi}$. Let the desired exponential decay rate be $\alpha = 0.005$ and solve the geometric program in Theorem~\ref{thm:optim:homo} to obtain the optimal cutting rates~$\phi_i$. Fig.~\ref{fig:ExPhi} shows a scatter plot for the optimal rates, $\phi_i$, versus the degrees of the nodes for all $i\in\mathcal{V}$. 
\begin{figure}[tb]
\centering
\includegraphics[width=.475\textwidth]{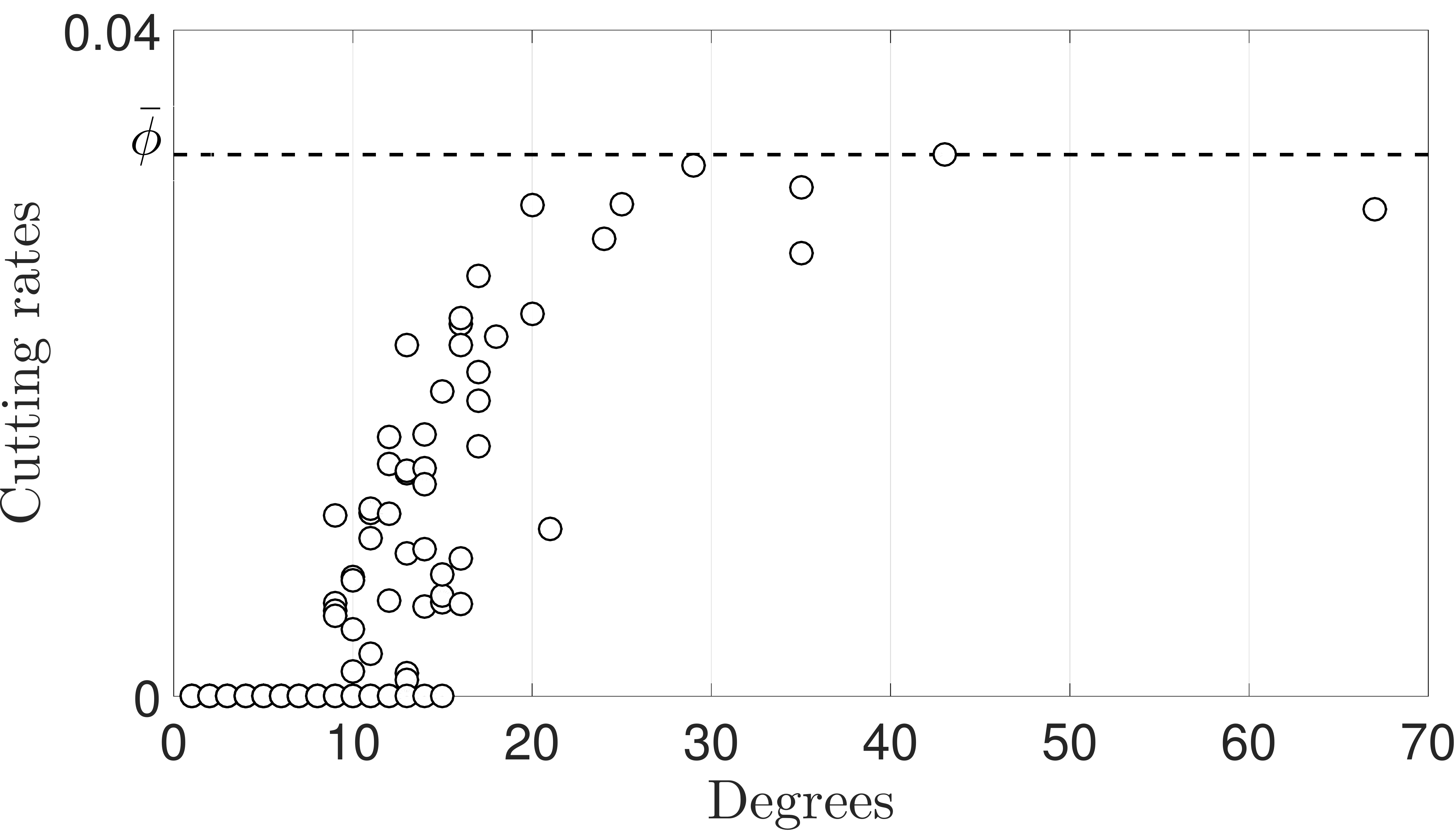}\hspace*{.5cm}
\caption{Cost-optimal cutting rates for stabilization}
\label{fig:ExPhi}
\end{figure}
The resulting switching policy suggests that, in general, nodes with a larger degree should have higher cutting rates (as could be naturally expected). However, the relationship between the optimal cutting rates and the degrees is not trivial. Alternatively, we have also studied the relationship between cutting rates and other network centrality measures and $K$-scores (though we omit these figures for space limitations). Our simulations do not show any trivial dependency between cutting rates and any of the measures considered.

\section{Conclusion}

In this paper, we have studied the dynamics of spreading processes taking place
in networks that adapt their structure depending on the state of the dynamics.
Our model is based on a collection of stochastic differential equations with
Poisson jumps that model the joint evolution of the states of the process taking
place in the network, as well as the evolution of the network structure. To
illustrate our framework, we have focused our attention in a popular model of
spreading dynamics, the SIS model, and study it dynamics over adaptive, switched
networks. For this particular model, we have derived conditions for the dynamics
of the spread to converge towards the disease-free equilibrium. Using this
stability result, we have then formulated an optimization program to find the
cost-optimal adaptive strategy to achieve stability. We have also showed that
this optimization program can be efficiently solved  using geometric
programming. A numerical example was included to illustrate our results. An
interesting future work is to fully investigate the difference of information
structures in our model and the one in \cite{Guo2013} addressed in
Remark~\ref{rmk:}.

% Generated by IEEEtran.bst, version: 1.13 (2008/09/30)

\end{document}